\def\uudot{\dot{u}}
\def\3nab{\tilde{\nabla}}
\def\la {\langle}
\def\ra {\rangle}
\def\be {\begin{equation}}
\def\ee {\end{equation}}
\def\ba {\begin{eqnarray}}
\def\ea {\end{eqnarray}}
\newtheorem{prop}{Proposition}
\newtheorem{cor}{Corollary}
\newcommand{\bra}[1]{\left(#1\right)}
\newcommand{\bras}[1]{\left[#1\right]}
\newcommand{\brac}[1]{\left\{#1\right\}}
\newcommand{\sfr}[2]{{\textstyle\frac{#1}{#2}}}
\newcommand{\lc}{\varepsilon}
\newcommand{\lb}{\{}
\newcommand{\rb}{\}}
\newcommand{\E}{{\mathcal E}}
\renewcommand{\H}{{\mathcal H}}
\newcommand{\barray}{\begin{array}}
\newcommand{\earray}{\end{array}}
\newcommand{\e}{e}
\newcommand{\N}{N}
\newcommand{\del}{\nabla}
\newcommand{\hatn}{a}
\newcommand{\dotn}{\alpha}
 \newcommand{\nab}{\nabla}
\newcommand \veps {\varepsilon}
\newcommand \ep {\epsilon}
\newcommand{\LB}{\left(}
\newcommand{\RB}{\right)}
\newcommand \om {\omega}
\newcommand{\na}{\nabla}
\newcommand{\udot}{{\mathcal A}}
\begin{document}

\title{Cosmic Censorship Conjecture revisited: Covariantly}

\author{Aymen I. M. Hamid}
\email{aymanimh@gmail.com}
\affiliation{Astrophysics and Cosmology Research Unit, School of Mathematics, Statistics and Computer Science, University of KwaZulu-Natal, Private Bag X54001, Durban 4000, South Africa.}
\affiliation{Physics Department, University of Khartoum, Sudan.}
\author{Rituparno Goswami}
\email{Goswami@ukzn.ac.za}
\affiliation{Astrophysics and Cosmology Research Unit, School of Mathematics, Statistics and Computer Science, University of KwaZulu-Natal, Private Bag X54001, Durban 4000, South Africa.}
\author{Sunil D. Maharaj}
\email{Maharaj@ukzn.ac.za}
\affiliation{Astrophysics and Cosmology Research Unit, School of Mathematics, Statistics and Computer Science, University of KwaZulu-Natal, Private Bag X54001, Durban 4000, South Africa.}

\begin{abstract}
In this paper we study the dynamics of the trapped region using a frame independent 
semi-tetrad covariant formalism for general Locally Rotationally Symmetric (LRS) class II spacetimes. We covariantly prove some important geometrical 
results for the apparent horizon, and state the necessary and sufficient conditions for a singularity to be locally naked. 
These conditions bring out, for the first time in a quantitative and transparent manner, the importance of the Weyl curvature in deforming and delaying the 
trapped region during continual gravitational collapse, making the central singularity locally visible.

\end{abstract}

\pacs{04.20.Cv	, 04.20.Dw}

\maketitle
\section{Introduction}
Since Penrose proposed the famous {\it  Cosmic Censorship Conjecture (CCC)} in 1969 \cite{CCC}, stating that 
singularities observable from the outside will never arise in generic gravitational collapse which starts from a perfectly reasonable 
nonsingular initial state, there have been numerous attempts towards validating this conjecture  by means of a rigorous mathematical proof. 
However, this conjecture remains unproved, and it has been recognised as one of the most important open problems in gravitational physics. 
The key point here is that the validity of this conjecture will confirm the already widely accepted 
and applied theory of black hole dynamics, which has considerable amount of astrophysical applications. 
On the other hand, it's overturn will throw the black hole dynamics into serious doubt. This is because most of the important fundamental global theorems 
in black hole physics assume that the spacetime manifold is {\it future asymptotically predictable}. In other words this condition ensures that there should be 
no singularity to the future of the partial Cauchy surface which is `naked' or visible from the future null infinity \cite{HE}.

Although no conclusive proof or disproof of CCC could be formulated, the quest gave rise to a number 
of counter examples which showed there are shell focusing naked singularities occurring at the centre of spherically symmetric dust, perfect fluids or radiation shells
(see for example ~\cite{Goswami:2006ph, Joshibook1} and the references therein).
We can, in principle, rule out these naked singularities by stating that dust or perfect fluids are not really 
`fundamental' forms of matter field, as their properties are not derived from a `proper' Lagrangian. 
However, if the cosmic censorship is to be established as a rigorous mathematical theorem, this objection 
has to be made precise in terms of a clear and simple restriction on the energy momentum tensor of the matter field. This is necessary because 
in the above mentioned examples, the matter satisfies physically reasonable conditions such as the energy conditions or a 
well posed initial value formulation for the Einstein field equations. Also, these forms of matter are widely used in discussing 
astrophysical processes, such as collapsing stars.

Extensive studies of various dynamical collapse models for a wide range of matter fields, mainly spherically symmetric, continued over the past two decades, investigating the final outcome of gravitational collapse (refer to \cite{Joshibook2} for a detailed analysis on this subject). 
The generic conclusions which emerged from these studies were extraordinary as they conclusively indicated that while the collapse always produces curvature
 generated fireballs characterised by diverging densities and curvatures, trapped surfaces may not develop early enough to always shield this process from an outside observer. 
Not just isolated trajectories but families of non-spacelike geodesics emerge from such a naked singularity, providing a non-zero measure set of trajectories escaping away.

An obvious question of considerable interest and significance, is then the following: What are the possible physical and geometrical factors that are responsible for this delay 
in the formation of trapped regions, that cover the spacetime singularity? In other words, we would like to inquire about the physical and geometrical effects operating 
during the continual collapse of a massive matter cloud that lead to the formation of a locally naked singularity rather than a black hole, or vice versa. Such an investigation 
should help us in obtaining a better understanding of the physics of black hole or naked singularity formation in gravitational collapse. Towards this end, the pioneering work was done by 
Joshi et al \cite{Joshi:2001xi} followed by \cite{Joshi:2004tb}, which showed that spacetime shear plays a crucial role in determining the 
end state of continual collapse. The important insights that emerged from these investigations were that there exists a remarkable connection between spacetime shear and inhomogeneity of collapsing matter cloud that can distort the geometry of the trapped region is such a way that the central singularity can be locally naked. 

We continue with this investigation to obtain more transparent physical picture of the problem; in this paper we study the dynamics of the trapped region using the frame independent 
semi-tetrad covariant formalism for general Locally Rotationally Symmetric (LRS) class II spacetimes (of which spherical symmetry is a subclass) \cite{EllisLRS}. We write down the field equations 
for the LRS II spacetimes as propagation, evolution and constraint equations in terms of different covariant scalars that have well defined physical and geometrical interpretations. We deduce the equations of null geodesics for these spacetimes in terms of these scalars, and we find the equation of the apparent horizon (the boundary of the trapped region) where the 
expansion of the null geodesics vanishes. We covariantly prove some geometric results for the apparent horizon and state the necessary and sufficient conditions for a singularity to be locally naked. These conditions bring out for the first time, in a quantitative and transparent manner, the importance of the  Weyl curvature in deforming and delaying the trapped region to make the central singularity locally naked. 

As we know the Weyl tensor, which is the trace-free part of the Riemann curvature tensor, gives the measure of the `pure' geometrical effect on the spacetime manifold, as this tensor can be non-zero even in the absence of any matter field. The Weyl tensor depicts the tidal forces experienced by the test particles, resulting in volume distortion and generating the spacetime shear. In fact, it is well known that the variation of acceleration vector along with  the electric part of the Weyl tensor act as a source for the shear evolution equation in general relativity. Hence, in general, in a spacetime with non-zero electric Weyl, shear will be generated even if it is zero at a given epoch.
The electric part of Weyl tensor also gives a measure of gravitational wave propagation. Four dimensional spacetimes with vanishing Weyl tensor are conformally flat. We rigorously
show that for such spacetimes, a collapsing perfect fluid necessarily ends up in a black hole end state as the singularity is always hidden within the horizon. This then relates 
conformal flatness with local visibility (or otherwise) of a spacetime singularity.

The paper is organised as follows: In the next section we provide a brief description of the semi-tetrad 1+3 and 1+1+2 formalisms, and define the covariant kinematical and dynamical variables that have well defined geometrical and physical significance. In section 3, we use these variables to write down the field equations for LRS-II spacetimes. In section 4, we deduce the equations of null geodesics and define the expressions for expansion, shear etc., for the null congruence in the two dimensional null screen space. In section 5, we derive the 
equation for apparent horizon, which is the boundary of the trapped region in terms of these covariant variables. This then gives a local frame independent description of the horizon, 
and we prove some important covariant results for spherical collapsing shells crossing the horizon ({\it i.e.} getting trapped). 
In section 6, we give the necessary and sufficient conditions for a spacetime singularity to be locally naked. Finally in the last section we use this result to establish the 
nature of the singularity which develops as the end state of gravitational collapse, for some special cases.

Unless otherwise specified, we use natural units ($c=8\pi G=1$) throughout this paper, Latin indices run from 0 to 3. 
The symbol $\nabla$ represents the usual covariant derivative and $\partial$ corresponds to partial differentiation. 
We use the $(-,+,+,+)$ signature and the Riemann tensor is defined by
\begin{equation}
R^{a}{}_{bcd}=\Gamma^a{}_{bd,c}-\Gamma^a{}_{bc,d}+ \Gamma^e{}_{bd}\Gamma^a{}_{ce}-\Gamma^e{}_{bc}\Gamma^a{}_{de}\;,
\end{equation}
where the $\Gamma^a{}_{bd}$ are the Christoffel symbols (i.e. symmetric in the lower indices), defined by
\begin{equation}
\Gamma^{a}{}_{bd}=\frac{1}{2}g^{ae}
\left(g_{be,d}+g_{ed,b}-g_{bd,e}\right)\;.
\end{equation}
The Ricci tensor is obtained by contracting the {\em first} and the {\em third} indices
\begin{equation}\label{Ricci}
R_{ab}=g^{cd}R_{cadb}\;.
\end{equation}
The symmetrisation and the antisymmetrisation  over the indexes of a tensor are defined as 
\begin{equation}
T_{(a b)}= \frac{1}{2}\left(T_{a b}+T_{b a}\right)\;,\qquad T_{[a b]}= \frac{1}{2}\left(T_{a b}-T_{b a}\right)\,.
\end{equation}
The Hilbert--Einstein action in the presence of matter is given by
\begin{equation}
{\cal S}=\frac12\int d^4x \sqrt{-g}\left[R-2\Lambda-2{\cal L}_m \right]\;,
\end{equation}
variation of which gives the Einstein's field equations as
\be
G_{ab}+\Lambda g_{ab}=T_{ab}\;.
\ee

\section{Semi-tetrad covariant formalisms}

Spacetimes can be described using tetrad formalisms or metric (or coordinate) based approaches. 
The tetrad formalisms range from the Newman-Penrose null tetrad method, 3+1 ADM decomposition, 1+3 covariant approach developed by Ehlers and Ellis 
to 1+1+2 covariant formalism. These include either a full tetrad approach or a `partial' covariant approach where only one or two tetrad vectors are chosen. In this section
we give a brief review of the last two formalisms mentioned above.

\subsection{1+3 Covariant formalism}

This formalism \cite{EllisCovariant} is based on a local 1+3 threading of the spacetime manifold 
with respect to a timelike congruence, such that spacetime is locally decomposed into space and time.
The 1+3 formalism has been a useful tool for understanding many geometrical and physical aspects of relativistic fluid flows, both in non-linear 
GR studies or in the gauge invariant, covariant perturbation formalism \cite{Ellisbook}.

In this approach we must first define a
time-like congruence with a unit tangent vector $u^a$. The natural
choice of this vector in our case will be the
tangent to the matter flow lines. Then the spacetime is split
locally in the form $R\otimes V$ where  $R$ denotes the worldline
along $u^a$ and  $V$ is the 3-space perpendicular to $u^a$. Any
vector $X^a$  can then be projected on the 3-space by the projection
tensor $h^a{}_b=g^a{}_b+u^au_b$.  The choice of the timelike vector naturally defines 
two derivatives: the vector $ u^{a}$ is used to define the \textit{covariant
time derivative} along the observers' worldlines (denoted by a dot)
for any tensor $ S^{a..b}{}_{c..d}$, given by 
\be
\dot{S}^{a..b}{}_{c..d}{} = u^{e} \nab_{e} {S}^{a..b}{}_{c..d} 
\ee
and the tensor $ h_{ab} $ is used to define the fully orthogonally
\textit{projected covariant derivative} $D$ for any tensor $
S^{a..b}{}_{c..d} $: 
\be D_{e}S^{a..b}{}_{c..d}{} = h^a{}_f
h^p{}_c...h^b{}_g h^q{}_d h^r{}_e \nab_{r} {S}^{f..g}{}_{p..q}\;,
\ee 
with total projection on all the free indices.  Angle brackets
denote orthogonal projections of vectors, and the orthogonally
\textit{projected symmetric trace-free} PSTF part of tensors: 
\be
V^{\la a \ra} = h^{a}{}_{b}V^{b}~, ~ S^{\la ab \ra} = \left[
h^{(a}{}_c {} h^{b)}{}_d - \frac{1}{3} h^{ab}h_{cd}\right] S^{cd}\;.
\label{PSTF} 
\ee 
This splitting of spacetime also naturally defines
the 3-volume element 
\be \ep_{a b c}=-\sqrt{|g|}\delta^0_{\left[ a
\right. }\delta^1_b\delta^2_c\delta^3_{\left. d \right] }u^d\;,
\label{eps1} 
\ee 
with the following identities 
\be 
\ep_{a b c}\ep^{d
e f}=3!h^d_{\left[ a \right. }h^e_bh^f_{\left. c \right] }\;\;;\;\;
\ep_{a b c}\ep^{d e c}=2!h^d_{\left[ a \right. }h^e_{\left. b
\right] } \label{eps2}. 
\ee
The covariant derivative of the time-like vector $u^a$ can now be
decomposed into the irreducible part as 
\be
\nabla_au_b=-A_au_b+\frac13h_{ab}\Theta+\sigma_{ab}+\ep_{a b
c}\om^c\;, 
\ee 
where $A_a=\dot{u_a}$ is the acceleration,
$\Theta=D_au^a$ is the expansion, $\sigma_{ab}=D_{\la a}u_{b \ra}$
is the shear tensor and $w^a=\ep^{a b c}D_bu_c$ is the vorticity
vector. Similarly the Weyl curvature tensor can be decomposed
irreducibly into the Gravito-Electric and Gravito-Magnetic parts as
\be 
E_{ab}=C_{abcd}u^cu^d=E_{\la ab\ra}\;;\;
H_{ab}=\frac12\ep_{acd}C^{cd}{}_{be}u^e=H_{\la ab\ra}\;, 
\ee 
which allows for a covariant description of tidal forces and gravitational
radiation. The energy momentum tensor for a general matter field can 
be similarly decomposed as follows:
\be
T_{ab}=\mu u_au_b+q_au_b+q_bu_a+ph_{ab}+\pi_{ab}\;,
\ee
where $\mu=T_{ab}u^au^b$ is the energy density, $p=(1/3 )h^{ab}T_{ab}$ is the isotropic pressure, $q_a=q_{\la a\ra}=-h^{c}{}_aT_{cd}u^d$ is the 3-vector defining 
the heat flux and $\pi_{ab}=\pi_{\la ab\ra}$ is the anisotropic stress.

\subsection{1+1+2 Covariant formalism}

A natural extension to the 1+3 formalism, which is optimised for spacetimes having a preferred spatial direction (for example spherical symmetry), is the 1+1+2 formalism developed recently by Clarkson and Barrett and it has been used extensively to study perturbations of black holes \cite{Clarkson:2002jz,Betschart:2004uu,Clarkson:2007yp}. In this formalism we first proceed with the same split of the 1+3 approach followed by another split along a preferred spatial direction. This allows us to derive a set of covariant scalar variables which are more advantageous to treat systems with one preferred direction. For example in spherically symmetric systems the governing field equations in the 1+1+2  approach are scalar equations and are much simpler than the ones of the 1+3 formalism which are in general tensorial.  
 
Hence in this approach we choose a further preferred vector field $\e^a$ which performs additional 
slicing of the `3-space'. This new vector field has to be orthogonal to $u^a$ such that it satisfies
$\e^a\e_a=1,~u^a\e_a=0$. The 1+3 projection tensor
$h_{a}^{~b}\equiv g_{a}^{~b}+u_au^b$ combined with $e^a$ defines a new projection tensor $N_{ab}$, \be \N_a^{~b}\equiv
h_a^{~b}-\e_a\e^b=g_{a}^{~b}+u_au^b-\e_a\e^b\ , \ee which projects
vectors orthogonal to $\e^a$ and $u^a$
($\e^a\N_{ab}=0=u^a\N_{ab}$) onto a 2-surface which is defined as the sheet ($\N_a^{~a}=2$). The volume element of this 
2-surface is then Levi-Civita 2-tensor, derived from the volume
element $\ep_{abc}$ for the observers'
rest spaces by \be \lc_{ab}\equiv\ep_{abc}\e^c = u^d\eta_{dabc}e^c\
;\qquad \lc_{ab}\e^b=0=\lc_{(ab)}\ . \ee
Any 3-vector $\psi^a$ can now be irreducibly split into a scalar, $\Psi$, which
is the part of the vector parallel to $\e^a$, and a vector,
$\Psi^a$, lying in the 2-surface orthogonal to $\e^a$: \ba 
\psi^a&=&\Psi\e^a+\Psi^{a},~~~\mbox{where}~~~\Psi\equiv \psi_a\e^a\
,\nonumber\\&&~~~\mbox{and}~~~\Psi^{a}\equiv \N^{ab}\psi_b\equiv \psi^{\bar a}\label{psia},
\ea where the bar over the index denotes projection with $\N_{ab}$.
Similarly, we can do the same for any tensor,
$\psi_{ab}$, as follows:
 \be
\psi_{ab}=\psi_{\langle
ab\rangle}=\Psi\bra{\e_a\e_b-\sfr12\N_{ab}}+2\Psi_{(a}\e_{b)}+\Psi_{{ab}}\
, \label{tensor-decomp} 
\ee where
 \ba
\Psi&\equiv &\e^a\e^b\psi_{ab}=-\N^{ab}\psi_{ab}\ ,\nonumber\\
\Psi_a&\equiv &\N_a^{~b}\e^c\psi_{bc}=\Psi_{\bar a}\ ,\nonumber\\
\Psi_{ab}&\equiv &
\bra{\N_{(a}^{~~c}\N_{b)}^{~~d}-\sfr{1}{2}\N_{ab}\N^{cd}}\psi_{cd}
\equiv \Psi_{\lb ab\rb}\ . \label{PSTF-TT}
\ea 
The curly brackets denote the PSTF tensors on the 2-sheets. Apart from the `{\it time}' (dot) derivative, of an object, we now introduce two new derivatives, 
which $ e^{a} $ defines, for any object $ \psi_{a...b}{}^{c...d}  $: 
\ba
\hat{\psi}_{a..b}{}^{c..d} &\equiv & e^{f}D_{f}\psi_{a..b}{}^{c..d}~, 
\\
\delta_f\psi_{a..b}{}^{c..d} &\equiv & N_{a}{}^{f}...N_{b}{}^gN_{h}{}^{c}..
N_{i}{}^{d}N_f{}^jD_j\psi_{f..g}{}^{i..j}\;.
\ea 
The hat-derivative is the derivative along the $e^a$ vector-field in the surfaces orthogonal to $ u^{a} $. The $\delta$ -derivative is the projected derivative onto the sheet, with the projection on every free index.

We can now split the usual 1+3 kinematical and Weyl quantities into the irreducible
set $\{\Theta,\udot,\Omega,\Sigma,{\cal E},{\cal H},\udot^a,\Sigma^a,{\cal
E}^a,{\cal H}^a,\Sigma_{ab},{\cal E}_{ab},{\cal H}_{ab}\}$ using
(\ref{psia}) and (\ref{tensor-decomp}) as follows \cite{Clarkson:2007yp}:
\ba
\uudot^a&=&\udot \e^a+\udot^a,\\
\omega^a&=&\Omega \e^a+\Omega^a,\\
\sigma_{ab}&=&\Sigma\bra{\e_a\e_b-\sfr{1}{2}\N_{ab}}+2\Sigma_{(a}\e_{b)}+\Sigma_{ab},\\
E_{ab}&=&{\cal E}\bra{\e_a\e_b-\sfr{1}{2}\N_{ab}}+2{\cal E}_{(a}\e_{b)}+{\cal E}_{ab},\\
H_{ab}&=&{\cal H}\bra{\e_a\e_b-\sfr{1}{2}\N_{ab}}+2{\cal H}_{(a}\e_{b)}+{\cal
H}_{ab}.
\ea
The shear scalar, $\sigma$, for example, may be expressed in the
form
\be
\sigma^2 \equiv \sfr{1}{2}\sigma_{ab}\sigma^{ab} = \sfr{3}{4}\Sigma^2 +
\Sigma_a\Sigma^a + \sfr{1}{2}\Sigma_{ab}\Sigma^{ab}.
\ee
Similarly we may split the fluid variables $q^a$ and $\pi_{ab}$,
\ba
q^a&=&Q \e^a+Q^a,\\
\pi_{ab}&=&\Pi\bra{\e_a\e_b-\sfr{1}{2}\N_{ab}}+2\Pi_{(a}\e_{b)}+\Pi_{ab}.
\ea
We are now able to decompose the covariant derivative of $e^a$ in the direction orthogonal to $u^a$ into it's irreducible parts giving 
\be 
{\rm D}_{a}e_{b} = e_{a}a_{b} + \frac{1}{2}\phi N_{ab} + 
\xi\epsilon_{ab} + \zeta_{ab}~, 
\ee
where 
\ba 
a_{a} &\equiv & e^{c}{\rm D}_{c}e_{a} = \hat{e}_{a}~, \\ 
\phi &\equiv & \delta_ae^a~, \\  \xi &\equiv & \frac{1}{2} 
\epsilon^{ab}\delta_{a}e_{b}~, \\ 
\zeta_{ab} &\equiv & \delta_{\lb a}e_{b \rb }~.
\ea
We see that on the 3-space,  moving along the preferred vector $ e^{a} $, $\phi$ represents the \textit{expansion of the sheet},  $\zeta_{ab}$ is the \textit{shear of $e^{a}$} (i.e. the distortion of the sheet) and $a^{a}$ its \textit{acceleration}. We can also interpret $\xi$ as the \textit{vorticity} associated with $e^{a}$ so that it is a representation of the ``twisting'' or rotation of the sheet. 

The Ricci identities for $\e_a$ is given by:
\be
R_{abc}\equiv2\del_{[a}\del_{b]}\e_c-R_{abcd}\e^d=0,\label{ricci}
\ee
where $R_{abcd}$ is the Riemann curvature tensor.
And the full covariant derivative of $\e_a$ and $u_a$ is now written as:
\ba\label{del_e}
\,\,\, \del_a\e_b&=&-\udot u_au_b-u_a\dotn_b +\bra{\Sigma+\sfr13\Theta}\e_a u_b+\xi\lc_{ab}\nonumber \\
&& +\bra{\Sigma_a-\lc_{ac}\Omega^c}u_b +\e_a\hatn_b
+\sfr{1}{2}\phi\N_{ab},
\ea
\ba\label{del_u}
\del_au_b&=&
-u_a\bra{\udot\e_b+\udot_b}+\e_a\e_b\bra{\sfr13\Theta+\Sigma}+\Omega\lc_{ab}\nonumber \\
&&+\e_a\bra{\Sigma_b+\lc_{bc}\Omega^c}+\bra{\Sigma_a-\lc_{ac}\Omega^c}\e_b\nonumber\\
&&+\N_{ab}\bra{\sfr13\Theta-\sfr12\Sigma}+
\Sigma_{ab},
\ea
We also write one more useful relation
\be
\hat{u}_a =
\bra{\sfr13\Theta+\Sigma}\e_a+\Sigma_a+\lc_{ab}\Omega^b.
\label{uhat}
\ee
\section{LRS-II spacetimes}

A spacetime manifold $(\mathcal{M},g)$ is called  {\it locally isotropic}, if every point $p\in (\mathcal{M},g)$ has continuous non-trivial isotropy group. When this group 
consists of spatial rotations the spacetime is called {\it locally rotationally symmetric} (LRS) \cite{EllisLRS}. These spacetimes exhibit locally (at each point) a unique preferred spatial direction, covariantly defined (for example, by the vorticity vector field or a non-vanishing acceleration of the matter fluids, etc.). The 1+1+2 formalism is therefore ideally suited for covariant description of these spacetimes. The preferred spatial direction in the LRS spacetimes constitutes a local axis of symmetry and in this case $ e^{a} $ is just a vector pointing along the axis of symmetry. Since LRS spacetimes are isotropic about this axis, \textit{all} 2-vectors and 2-tensors vanish, so that there are no preferred directions in the sheet. Thus, all the non-zero 1+1+2 variables are covariantly defined scalars. The variables 
$\brac{\udot, \Theta,\phi, \xi, \Sigma,\Omega, \E, \H, \mu, p, \Pi, Q }$, fully describe LRS spacetimes, and are what is solved for in the 1+1+2 approach. 

Within the LRS cases is the LRS-II class that admits spherically symmetric solutions and is free of rotation, thus allowing for the vanishing of the variables $\Omega$, $ \xi $ and $ \H $. The set of quantities that fully describe LRS class II spacetimes are $\brac{\udot, \Theta,\phi, \Sigma,\E, \mu, p, \Pi, Q }$.
It was shown that the most general metric for LRS II can be written as \cite{Stewart-Ellis}
\ba
   ds^2 &= &-A^{2}(t,\chi)\,dt^2+B^2(t,\chi)\,d\chi^2 \nonumber\\
   && +C^2(t,\chi)\,[\,dy^2+D^2(y,k)\,dz^2\,] \;,\label{LRSds}
\ea
where $t$ and $\chi$ are the affine parameters along the integral curves of $u^a$ and $e^a$ respectively, and $k=(1,0,-1)$ describes the closed, flat or open geometry 
of the 2-sheets respectively.

We now have all the tools to derive the propagation and the evolution equations for the LRS-II
variables (for more details see \cite{Clarkson:2007yp}).  These equations are obtained by the Ricci identities of the vectors $u^a$ and $e^a$ and the doubly contracted Bianchi identities.

\textit{Propagation}:
\ba
\hat\phi  &=&-\sfr12\phi^2+\bra{\sfr13\Theta+\Sigma}\bra{\sfr23\Theta-\Sigma}
    \nonumber\\&&-\sfr23\bra{\mu+\Lambda}
    -\E -\sfr12\Pi,\,\label{hatphinl}
\\  
\hat\Sigma-\sfr23\hat\Theta&=&-\sfr32\phi\Sigma-Q\
,\label{Sigthetahat}
 \\  
\hat\E-\sfr13\hat\mu+\sfr12\hat\Pi&=&
    -\sfr32\phi\bra{\E+\sfr12\Pi}
    +\bra{\sfr12\Sigma-\sfr13\Theta}Q.\;
\ea

\textit{Evolution}:
\ba
   \dot\phi &=& -\bra{\Sigma-\sfr23\Theta}\bra{\udot-\sfr12\phi}
+Q\ , \label{phidot}
\\   
\dot\Sigma-\sfr23\dot\Theta
&=&
-\udot\phi+2\bra{\sfr13\Theta-\sfr12\Sigma}^2\nonumber\\
        &&+\sfr13\bra{\mu+3p-2\Lambda}-\E+\sfr12\Pi\, ,\label{Sigthetadot}
\\  
\dot\E -\sfr13\dot \mu+\sfr12\dot\Pi &=&
    +\bra{\sfr32\Sigma-\Theta}\E
    +\sfr14\bra{\Sigma-\sfr23\Theta}\Pi\nonumber\\
    &&+\sfr12\phi Q-\sfr12\bra{\mu+p}\bra{\Sigma-\sfr23\Theta}\ . \label{edot}
\ea

\textit{Propagation/evolution}:
\ba
   \hat\udot-\dot\Theta&=&-\bra{\udot+\phi}\udot+\sfr13\Theta^2
    +\sfr32\Sigma^2 \nonumber\\
    &&+\sfr12\bra{\mu+3p-2\Lambda}\ ,\label{Raychaudhuri}
\\
    \dot\mu+\hat Q&=&-\Theta\bra{\mu+p}-\bra{\phi+2\udot}Q -
    \sfr32\Sigma\Pi,\,
\\    \label{Qhat}
\dot Q+\hat
p+\hat\Pi&=&-\bra{\sfr32\phi+\udot}\Pi-\bra{\sfr43\Theta+\Sigma} Q\nonumber\\
    &&-\bra{\mu+p}\udot\ .
\ea
The 3-Ricci scalar of the spacelike 3-space orthogonal to $u^a$ can be expressed as
\be
^3R =-2\bras{\hat \phi +\sfr34\phi^2 -K}\;,\label{3sca}
\ee
where $K$ is the Gaussian curvature of the 2-sheet defined by
$^2R_{ab}=KN_{ab}$. In terms of the covariant scalars we can write the Gaussian curvature $K$ as
\be
K = \sfr13\bra{\mu+\Lambda}-\E-\sfr12\Pi +\sfr14\phi^2
-\bra{\sfr13\Theta-\sfr12\Sigma}^2\ . \label{gauss}
\ee
Finally the evolution and propagation equations for the Gaussian curvature $K$ are
\ba
\dot K &=& -\bra{\sfr23\Theta-\Sigma}K\ , \label{Kdot}\\
\hat K &=& -\phi K\ . \label{Khat}
\ea
\section{Null geodesics in LRS-II spacetimes}
In this section, we derive the equation for null geodesics in LRS-II spacetimes and investigate the geometry of these null congruences. 
Null geodesics (light rays) are characterised by the curves $x^{a}(\nu)$ on $(\mathcal{M},g)$, where $\nu$ is an affine parameter along the geodesics. 
The tangent to these curves is defined by
 \be
k^{a} = \frac{dx^{a}}{d\nu}(\nu)
\ee
where $k^a$ is a null vector obeying
\be
k^{a}k_{a} = 0~.\label{nullvector}
\ee
Also, since the tangent vector to the geodesic is parallely propagated to itself, we can write
\be\label{geo-k}
k^{b}\na_{b}k^{a} =\frac{\delta k^{a}}{\delta \nu} = 0~, 
\ee
where $\frac{\delta }{\delta \nu} = k^{b}\na_{b}$ as the derivative along the ray. 
In the usual $1+3$ decomposition of null geodesics, we define 
the unit spacial vector $n_a$ as
\be
n^{a}n_{a} = 1~, ~~ n^{a}u_{a} = 0\;.
\ee
The null vector $k^a$  can now be split in the usual way
\be
k^a=E(u^a+n^a)\,,
\ee
The
$1+1+2$  split of $n^a$ can then be performed, such that \cite{deSwardt:2010nf, Nzioki:2010nj}
\be 
k^{a} = E(u^{a} + \kappa e^{a} + \kappa^{a})\,,\label{splitNull} 
\ee
where $E \equiv -u_{a}k^{a}$ can be interpreted as the energy associated with the ray,
$\kappa\equiv k^ae_a$ is the {magnitude of the component along the preferred spatial direction,} and $\kappa^a$ is the 
component lying on the 2-sheet.  

At this point let us define the notion of locally {\it outgoing} and {\it incoming} null geodesics with respect to the preferred spatial direction. Consider 
any open subset ${\mathcal S}$ of $(\mathcal{M},g)$ and let $x^{a}(\nu)$ be a null geodesic in ${\mathcal S}$. Let $k^a$ be the tangent to this geodesic. 
If $e^ak_a>0$ in ${\mathcal S}$ then the geodesic is considered to be outgoing with respect to the preferred direction in ${\mathcal S}$. Similarly 
$e^ak_a<0$ denotes an incoming geodesic. This can also be explained in terms of the local co-ordinates.
Let $p_1$ and $p_2$ be two points 
on $x^{a}(\nu)$ such that $p_2$ is in the causal future of $p_1$. Both these points can be labelled by the values of `$t$' and `$\chi$' 
(which are the affine parameters 
on the integral curves of the vectors $u^a$ and $e^a$ respectively) and the local coordinates on the 2-sheets. Let the values of $\chi$ at these points 
be $\chi_1$ and $\chi_2$ respectively. If $\chi_2>\chi_1$ then the geodesic is considered to be outgoing and if $\chi_2<\chi_1$ then the geodesic is considered to be 
incoming (with respect to the preferred direction).

\subsection{The propagation equations for the null geodesics}
The propagation equations for the energy $E$ and the component $\kappa$ in a LRS-II spacetime (where the sheet component $\kappa^{a}$ vanishes)
can be derived by substituting 
(\ref{splitNull}) into (\ref{geo-k}), and projecting the expression along the timelike direction ($u^{a}$)
and along the radial direction ($e^{a}$) \cite{deSwardt:2010nf,Nzioki:2010nj}
\be
\frac{\delta E}{\delta \nu} =E' = - E^{2}\kappa\udot- \sfr32\Sigma \kappa^{2} E^{2} -E^{2} \bra{\sfr13\Theta - \sfr12\Sigma}~, \label{Eprime1}
\ee
\be
\frac{\delta \kappa}{\delta \nu} =\kappa{}' = E \bra{1-\kappa^{2}} \bra{\sfr12 \phi - \udot -\sfr32\Sigma}~\label{Kprime1}.
\ee
We have used the following properties
\ba
k^{b}u_{b} &=& -E,~~~ k^{b}e_{b}=E \kappa, ~~~ N^{a}{}_{b} k^{b}= E \kappa^{a},\nonumber\\
\veps^{a}{}_{b}k^{b} &=& E \veps^{a}{}_{b}\kappa^{b},~~~ u_{a}\kappa^{a} = 0,~~~ e_{a}\kappa^{a} = 0\;,
\ea
as well as \cite{Nzioki:2010nj}
\ba
u_{a}' &=& E\udot e_{a} + E\kappa \LB \sfr13\Theta + \Sigma \RB
e_{a}  \\ \nonumber
&& + E\LB \sfr13\Theta - \sfr12\Sigma \RB\kappa_a + E\Omega \veps_{ab}\kappa^b ~,\label{uPrime} \\
e_{a}' &=& E\udot u_{a} + E\kappa\LB \Sigma + \sfr13\Theta \RB u_{a} +
\sfr12 E\phi\kappa_a + E\xi\veps_{ab}\kappa^b~,\nonumber \label{ePrime}
\ea
which are obtained from (\ref{del_e}\,,\,\ref{del_u}) with the definition of prime introduced above. 
For null rays along the preferred spatial direction we have $\kappa=\pm 1$ (denoting the outgoing and incoming geodesics). Then we can easily 
see that the equation (\ref{Kprime1}) is satisfied identically and (\ref{Eprime1}) simplifies to
\be
\frac{\delta E}{\delta \nu} =E' = \mp E^{2}\udot - E^{2} \bra{\Sigma+\sfr13\Theta}~.\label{Eprime2}
\ee

\subsection{The Screen-Space}
As we have already seen,  for LRS-II spacetimes the outgoing null vector is defined as 
\be
k^a=E\bra{u^a+e^a}.
\ee
Since the hypersurface orthogonal to null vector $k^a$, contains $k^a$ and hence the projection onto a locally orthogonal space now has to be defined differently.
Let us now define the projection tensor $\tilde{h}_{ab}$, which projects tensors and vectors into the 2-D 
screen space orthogonal to $k^a$, as \cite{deSwardt:2010nf}
\be
\tilde{h}_{ab}\equiv g_{ab}+2k_{(a}l_{b)}, \,\,\tilde{h}^a_a=2,\,\,\tilde{h}_{ac}\tilde{h}^c_b=\tilde{h}_{ab},\,\,\tilde{h}_{ab}k^b=0,\label{ht}
\ee
where $l_a$ is null ingoing geodesic that obeys
\be
l^al_a=0,\,\,k^al_a=-1\,\, \text{and} \,\,\frac{\delta l^a}{\delta \nu}=k^b\nabla_bl^a=0.
\ee
Using these definitions, the general form of $l^a$ can be written as:
\be
l^a=\frac{1}{2E}\bra{u^a-e^a}\label{l},
\ee
and substituting (\ref{l}) into (\ref{ht}) the screen-space projection tensor is obtained as
\be
\tilde{h}_{ab}=g_{ab}+u_au_b-e_ae_b.
\ee
It is interesting to note that although defined differently, we automatically have 
\be
\tilde{h}_{ab}=N_{ab}
\ee
 An expression for any vector or tensor lying on the 2-D surface can be obtained by
\be
\tilde{V}^a=\tilde{h}^{a }_{~b}V_b,\,\,\,\tilde{T}^{a\dots c}_{~~~~~b\dots d}=
\tilde{h}^a_{~e}\tilde{h}^f_{~b}\dots \tilde{h}^h_{~d}T^{e\dots g}_{~~~~~f\dots h}.
\ee
For completeness, we will write here the full 1+3 decomposition of the covariant derivative of the null vector $k^a$ 
for a general spacetime \cite{deSwardt:2010nf}
\be
\nabla_bk_a=\sfr12 \tilde{h}_{ab}\tilde{\Theta}_{out}+\tilde{\sigma}_{ab}+\tilde{\omega}_{ab}+\tilde{X}_ak_b+\tilde{Y}_bk_a
+\lambda k_ak_b,
\ee
where 
\be
\tilde{X}_a=\frac1Ee^d\nabla_dk_a,\;\tilde{Y}_a=\frac1Ee^d\nabla_ak_d,\;\lambda=-\frac{1}{E^2}e^ce^d\nabla_dk_c,
\ee
and  $\tilde{\Theta}_{out}$, $\tilde{\sigma}_{ab}$, $\tilde{\omega}_{ab}$ represent the expansion, shear and vorticity 
of the outgoing null congruence respectively. A similar decomposition can be done for the incoming null geodesic $l^a$.

\section{Apparent Horizon in spherically symmetric spacetimes}

As we now have a complete picture of the equations governing the geometry of null geodesics in LRS-II spacetimes,  we will use these results in 
this section to derive some important propositions regarding the apparent (or cosmological) horizons.
Henceforth we will only consider the class of spherically symmetric spacetimes which belongs to the LRS-II class 
with an extra condition of positivity of the Gaussian curvature of the 2-sheets ($K>0$). 

Let us briefly discuss the concept of a {\em closed trapped surface} for a spherically symmetric spacetime. As described in \cite{HE}, 
we will consider a spherical emitter, surrounding a massive body, emitting a flash of light. In the normal circumstances, by Huygen's construction, 
there will be outgoing and incoming spherical wavefronts and the surface area of the outgoing wavefronts will be greater than the 
emitting sphere while that of the incoming wavefront will be less than the emitting sphere. In other words, in a normal situation, 
the volume expansion of the outgoing null congruence orthogonal to the sphere is always positive ($\tilde{\Theta}_{out}>0$) while that of the incoming congruence is always 
negative ($\tilde{\Theta}_{in}<0$). However, if sufficiently large amount of matter is present within the emitting sphere, the surface areas of {\it both} 
incoming and outgoing wavefronts will be less than that of the emitting sphere. The surface of the emitting sphere is then said to be a {\em closed trapped surface}. 
In other words the volume expansion of the outgoing null congruence orthogonal to a closed trapped surface is negative. The collection of all closed trapped surfaces
in a four dimensional spacetime manifold constitutes a {\it trapped region}. The boundary of the trapped region is called the {\it apparent horizon} where the 
volume expansion of the outgoing null congruence vanishes ($\tilde{\Theta}_{out}=0$). For expanding cosmologies (like de-Sitter universe) we can similarly define 
the {\it cosmological horizon} where ($\tilde{\Theta}_{in}=0$). For a detailed discussion on trapped surfaces and black holes we refer to \cite{senovilla} (and the 
references therein).
\begin{prop}
For any spherically symmetric spacetime $(\mathcal{M},g)$ that allows a local 1+1+2 splitting, the apparent horizon is described by the 
curve $ \bra{\sfr23\Theta-\Sigma+\phi}=0 $, while the cosmological horizon is described by $ \bra{\sfr23\Theta-\Sigma-\phi}=0 $, in the local $[u,e]$ plane.
\end{prop}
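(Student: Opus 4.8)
The plan is to evaluate the volume expansion $\tilde\Theta_{out}$ of the outgoing null congruence explicitly in terms of the LRS-II scalars and then identify the apparent horizon with its zero set; the cosmological horizon will follow by the same argument applied to the incoming congruence. First I would note that contracting the Section~4 decomposition of $\nabla_b k_a$ with the screen projector $\tilde h^{ab}$ kills every term but the expansion one: $\tilde\sigma_{ab}$ is trace-free on the sheet, $\tilde\omega_{ab}$ is antisymmetric, and $\tilde h^{ab}k_b=0$ annihilates the $\tilde X_a k_b$, $\tilde Y_b k_a$ and $\lambda k_a k_b$ pieces, leaving $\tilde\Theta_{out}=\tilde h^{ab}\nabla_b k_a$. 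Since the excerpt already establishes $\tilde h_{ab}=\N_{ab}$, this reads $\tilde\Theta_{out}=\N^{ab}\nabla_a k_b$, and likewise $\tilde\Theta_{in}=\N^{ab}\nabla_a l_b$ with $l^a=\tfrac{1}{2E}(u^a-e^a)$ as in (\ref{l}).

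Next I would insert $k^a=E(u^a+e^a)$. The term $(\nabla_a E)(u_b+e_b)$ is killed by $\N^{ab}$ because $\N^{ab}u_b=\N^{ab}e_b=0$, so $\tilde\Theta_{out}=E\,\N^{ab}(\nabla_a u_b+\nabla_a e_b)$. I would then substitute the full covariant derivatives (\ref{del_u}) and (\ref{del_e}) and contract term by term. Anything carrying a free $u_a$, $u_b$, $e_a$ or $e_b$ dies against $\N^{ab}$; the $\epsilon_{ab}$ terms (coefficients $\Omega$ and $\xi$) vanish because $\N^{ab}$ is symmetric while $\epsilon_{ab}$ is antisymmetric, and in any case $\Omega=\xi=0$ for LRS-II; the genuine sheet tensors $\Sigma_{ab}$ and all 2-vectors vanish identically for LRS-II; and $\N^{ab}\N_{ab}=2$. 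What survives is $\N^{ab}\nabla_a u_b=2(\tfrac13\Theta-\tfrac12\Sigma)$ and $\N^{ab}\nabla_a e_b=\phi$, so $\tilde\Theta_{out}=E(\tfrac23\Theta-\Sigma+\phi)$. Repeating the computation with $l^a$ merely flips the sign of the $e^a$ contribution, giving $\tilde\Theta_{in}=\tfrac{1}{2E}(\tfrac23\Theta-\Sigma-\phi)$.

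Finally, since $E=-u_ak^a$ is the photon energy, which is finite and strictly positive along the ray, $\tilde\Theta_{out}=0\iff\tfrac23\Theta-\Sigma+\phi=0$ and $\tilde\Theta_{in}=0\iff\tfrac23\Theta-\Sigma-\phi=0$. Regarding the covariant scalars as functions of the coordinates $(t,\chi)$ of the metric (\ref{LRSds}), these are exactly the curves in the local $[u,e]$ plane asserted for the apparent and cosmological horizons.

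I do not anticipate a deep obstacle here; the work is entirely bookkeeping. The three points that need care are: confirming the $\nabla E$ contribution really drops out under the $\N^{ab}$-trace; tracking the factor $\tfrac12$ in the $\tfrac12\tilde h_{ab}\tilde\Theta_{out}$ term so that the trace returns $\tilde\Theta_{out}$ rather than $2\tilde\Theta_{out}$; and verifying that \emph{every} 2-vector and 2-tensor appearing in (\ref{del_u}) and (\ref{del_e}) vanishes for LRS-II, so that no hidden contribution to the sheet trace is overlooked.
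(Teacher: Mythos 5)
Your proposal is correct and follows essentially the same route as the paper: both compute $\tilde{\Theta}_{out}=\tilde{h}^{ab}\nabla_b k_a=E N^{ab}\nabla_a(u_b+e_b)$, substitute the decompositions (\ref{del_u}) and (\ref{del_e}), and read off $E\bra{\sfr23\Theta-\Sigma+\phi}$, with the incoming congruence $l^a$ handled identically for the cosmological horizon. Your additional bookkeeping (the $\nabla E$ term dropping under the sheet trace, the factor of $\sfr12$, the vanishing of all 2-vectors and 2-tensors in LRS-II) simply makes explicit what the paper leaves implicit.
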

\begin{proof}
We know, by definition, $\tilde{\sigma}^a_{~a}=0$, $e^a\tilde{\sigma}_{ab}$ $=0=u^a\tilde{\sigma}_{ab}$,
$e^a\tilde{\omega}_{ab}=0=u^a\tilde{\omega}_{ab}$. Also together with the properties in (\ref{ht}), we can easily 
conclude that
\ba
\tilde{\Theta}_{out}&=&\tilde{h}^{ab}\nabla_bk_a\nonumber\\
&=&EN^{ab}\nabla_a\bra{u_b+e_b}
 \label{tth}.
\ea
Now using (\ref{del_u}) and (\ref{del_e}) in (\ref{tth}) we obtain,
\be
\tilde{\Theta}_{out}=E\bra{\sfr23\Theta-\Sigma+\phi}\;.
\ee
Hence for a null congruence with non-zero energy $E$, $\tilde{\Theta}_{out}=0$ implies that $ \bra{\sfr23\Theta-\Sigma+\phi}=0 $.
Similarly we can use the decomposition of the incoming null vector $l^a$ to obtain the equation for the cosmological horizon. 
$\tilde{\Theta}_{in}=0$ will then imply $ \bra{\sfr23\Theta-\Sigma-\phi}=0 $.
 \end{proof}
\begin{prop}
For any spherically symmetric spacetime $(\mathcal{M},g)$ that allows a local 1+1+2 splitting, the gradient of the Gaussian curvature of 
the 2-sheets that intersect with the apparent (or cosmological) horizon is null.
\end{prop}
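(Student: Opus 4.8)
The plan is to use the fact that in an LRS-II spacetime the Gaussian curvature $K$ of the 2-sheets is a covariantly defined \emph{scalar} whose only non-vanishing derivatives are the dot- and hat-derivatives. Since all genuinely two-dimensional (sheet) quantities vanish by the local rotational symmetry about the preferred direction, we have $\delta_a K = 0$, so the gradient $\na_a K$ lies entirely in the local $[u,e]$ plane. Contracting with $u^a$ and $e^a$ and using $u^au_a=-1$, $e^ae_a=1$, $u^ae_a=0$, one may write $\na_a K = -\dot K\, u_a + \hat K\, e_a$, and squaring this covector gives
\be
\na_a K\,\na^a K = -\dot K^2 + \hat K^2~.
\ee

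The second step is to substitute the evolution and propagation equations for $K$, equations (\ref{Kdot}) and (\ref{Khat}), namely $\dot K = -\bra{\sfr23\Theta-\Sigma}K$ and $\hat K = -\phi K$. This turns the expression above into
\be
\na_a K\,\na^a K = \bras{\phi^2 - \bra{\sfr23\Theta-\Sigma}^2}K^2 = -\bra{\sfr23\Theta-\Sigma+\phi}\bra{\sfr23\Theta-\Sigma-\phi}K^2~.
\ee

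The conclusion is then immediate from Proposition 1. On the apparent horizon the factor $\bra{\sfr23\Theta-\Sigma+\phi}$ vanishes, and on the cosmological horizon the factor $\bra{\sfr23\Theta-\Sigma-\phi}$ vanishes; in either case $\na_a K\,\na^a K = 0$ on the sheet where it meets the horizon, so the gradient of $K$ there is a null covector. Equivalently, on the apparent horizon $\dot K = -\hat K$ and on the cosmological horizon $\dot K = +\hat K$, which is precisely the null condition.

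There is essentially no hard step in this argument. The one point that needs a moment's care is the justification that $\na_a K$ carries no sheet component, i.e.\ that $\delta_a K \equiv 0$ in LRS-II; this is forced by local rotational symmetry about $e^a$, which makes every 2-vector --- including the projected gradient of any covariant scalar --- vanish identically (it is also transparent from the metric (\ref{LRSds}), where $K$ depends only on $t$ and $\chi$). Once that is granted, the result is a one-line algebraic consequence of the factorisation above together with the horizon equations of Proposition 1.
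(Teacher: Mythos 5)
Your proof is correct and follows essentially the same route as the paper: decompose $\na_a K$ into dot- and hat-derivative parts (noting $\delta_a K=0$ by the symmetry), substitute equations (\ref{Kdot}) and (\ref{Khat}), factor, and invoke Proposition 1. Your overall sign in the factorised expression is in fact the correct one (the paper's displayed equation drops a minus sign), though this does not affect the conclusion since the product vanishes on either horizon regardless.
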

\begin{proof}
Let us calculate the quantity $\nabla_a K\nabla^aK$ for a spherically symmetric spacetime (where $K\ne 0$):
\be
\nabla_a K\nabla^aK=\bra{-u^au^b+e^ae^b}\nabla_aK\nabla_bK=-\dot{K}^2+\hat{K}^2\label{Geq}.
\ee
Now using (\ref{Kdot}) and (\ref{Khat}) in (\ref{Geq}) we get
\be
\nabla_a K\nabla^aK=\bra{\sfr23\Theta-\Sigma+\phi}\bra{\sfr23\Theta-\Sigma-\phi}K^2\;.
\ee
Hence for the 2-sheets intersecting the horizon (apparent or cosmological), the gradient of their Gaussian curvature is null.
\end{proof}
As we are considering the scenario of gravitational collapse of massive stars, henceforth we will only concentrate on the apparent horizon. We have already seen that the 
curve
\be
\Psi\equiv\frac23\Theta-\Sigma+\phi=0,
\ee
describes the apparent horizon. Let the vector $\Psi^a=\alpha u^a+\beta e^a$ be the tangent to the curve in the local $[u,e]$ plane. 
Then we must have $\Psi^a \nabla_a\Psi=0$. Since we know that $\nabla_a\Psi=-\dot{\Psi}u_a+\hat{\Psi}e_a$, we can immediately see 
the slope of the tangent to the apparent horizon on the local $[u,e]$ plane is given by $\sfr{\alpha}{\beta}=-\sfr{\hat{\Psi}}{\dot{\Psi}}$.
Now using this decomposition with the field equations (\ref{hatphinl}) to (\ref{Qhat}), we obtain
\ba\label{normal}
\nabla_a\Psi=&&\bra{\sfr13\mu+p-{\cal E}+\sfr12 \Pi-Q}u_a\nonumber\\&&+\bra{-\sfr23\mu-\sfr12\Pi-{\cal E}+Q}e_a,
\ea
and hence
\be\label{alph}
\frac{\alpha}{\beta}=\frac{\sfr23\mu+\sfr12\Pi+{\cal E}-Q}{-\sfr13\mu-p+{\cal E}-\sfr12 \Pi+Q}\;.
\ee
It is interesting to note that the matter thermodynamic quantities together with the Weyl scalar completely determine the tangent to the apparent horizon. 
We will define the apparent horizon to be locally {\it outgoing} at a point $p\in[u,e]$, if the slope of the tangent to the horizon is positive at $p$, that is 
$\sfr{\alpha}{\beta}>0$. Let the point $p$ be labelled by the values of the local coordinates ($t_0,\chi_0$) which are the affine parameters along the 
integral curves of $u^a$ and $e^a$ 
respectively. Then a locally outgoing apparent horizon at $p$ would imply that the 2-sheets (spherical shell) labelled by $\chi_0+\epsilon$ will get trapped 
later than $t=t_0$, while the 2-sheet labelled by $\chi_0$ gets trapped at $t=t_0$. Finally as we can easily see that the sign of the scalar $\Psi^a\Psi_a$, determines whether the 
curve $\Psi=0$ is timelike, spacelike or null in the $[u,e]$ plane. Hence $\sfr{\alpha^2}{\beta^2}>(<)1$ denotes the horizon to be locally timelike (spacelike). If $\sfr{\alpha^2}{\beta^2}=1$ then the horizon is null.

As an example let us consider the spherically symmetric vacuum spacetime. Then by Birkhoff's theorem the spacetime is static and hence $\Theta=\Sigma=0$ \cite{GosEllis}. 
Thus the horizon is described by the curve $\phi=0$. In this case all the matter variables vanish, we have $\sfr{\alpha}{\beta}=1$ and we can easily see that the horizon is outgoing null. This is the {\it event horizon} of the Schwarzschild spacetime. Indeed if we calculate $\phi$ in Schwarzschild coordinates we get 
\be
\phi=\frac2r\sqrt{1-\frac{2m}{r}},
\ee
and $\phi=0$ corresponds to the event horizon at $r=2m$.

\section{End state of a spherical gravitational collapse}
 
Having derived the equations that govern the dynamics of the apparent horizon in a spherically symmetric spacetime, we are now in a position 
to analyse the end state of continual gravitational collapse. Let us consider the continual collapse of a general matter cloud to a final shell-focusing singularity, where all matter shells collapse to a zero physical radius. In particular, we analyse specifically the nature of the central singularity in detail to determine when it will be covered by the horizon, and when it will be visible and causally connected to outside observers. If there are future directed families of nonspacelike curves coming out from the singularity and reaching faraway observers, then the singularity will be naked. The absence of such families will give the covered case when the result is a black hole. We specifically focus on the central singularity as it has been shown a numerous times that if all the physically reasonable energy conditions are satisfied by the collapsing matter, then the non-central singularities are always covered \cite{Joshibook2}.

Broadly, it can be stated that, if the neighbourhood of the centre gets trapped earlier than the singularity, then it is covered, otherwise it is naked with families of escaping nonspacelike future directed trajectories escaping away from it. Here we implicitly assume that the singularity curve (time taken for a spherical shell to become singular) is a non-decreasing function of the affine parameter of the integral curve of the vector $e^a$. Otherwise non-central shells will become singular before the central shell and we will have to be contents with pathologies like shell crossing singularities. 

We would like to emphasize here that we are considering the absence of shell-crossing singularities as an extra condition on the spacetime. In terms of the covariant geometrical variables, this condition is equivalent to $\hat{K}<0$ throughout the collapsing spacetime. From equation (\ref{Khat}) we can immediately see that for a collapsing shell with non-zero Gaussian curvature, $\phi>0$ ensures no shell crossing condition. In other words, the 3 dimensional expansion of the spacelike vector $e^a$ should not vanish anywhere in the collapsing spacetime.

\begin{prop}
Consider the continued collapse of a general spherically symmetric matter cloud from a regular initial epoch and obeying the physically reasonable energy conditions.
If the following conditions are satisfied:
\begin{enumerate}
\item The spacetime is free of shell crossing singularities,
\item Closed trapped surfaces exist,
\end{enumerate}
then the necessary and sufficient condition for the central singularity to be locally naked is that the slope of the tangent to the apparent horizon at the central singularity is
positive and non-spacelike ($\sfr{\alpha}{\beta}\ge1$).
\end{prop}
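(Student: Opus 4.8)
The plan is to reduce local nakedness to a first-order comparison, carried out in the local $[u,e]$ plane near the central singularity $p_{0}=(t_{s}(0),0)$ (with $\chi=0$ the centre and $t_{s}(\chi)$ the non-decreasing singularity curve), between the apparent-horizon curve $\Psi\equiv\frac{2}{3}\Theta-\Sigma+\phi=0$ and the outgoing radial null congruence $k^{a}=E(u^{a}+e^{a})$ of \reff{Eprime2}. The underlying fact, standard for spherically symmetric causal structure, is that such radial rays are the \emph{fastest} outgoing causal curves, so $p_{0}$ is locally naked if and only if an outgoing radial null geodesic reaches points off the singularity in every neighbourhood of $p_{0}$ while staying in the untrapped sector $\{\Psi>0\}$ (equivalently $\{\tilde{\Theta}_{out}>0\}$, by Proposition~1); and if that radial ray is instead captured, the whole future causal cone of $p_{0}$ is, so $p_{0}$ is covered.

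First I would fix the local structure of the trapped region. Writing condition~1 (absence of shell crossings) as $\phi>0$ throughout the cloud, equivalently $\hat K<0$ by \reff{Khat}, guarantees that distinct $\chi$-values label genuinely distinct shells all the way down to $p_{0}$ --- so that the ordering of $t_{AH}(\chi)$ against $t_{s}(\chi)$ is meaningful --- and also forces $\Psi\to+\infty$ at the regular centre, so that $\{\Psi>0\}$ contains a punctured neighbourhood of the central worldline. Combined with $\Psi>0$ at the regular initial epoch, condition~2 ($\{\Psi<0\}\neq\varnothing$) and the continuity of $\Psi$, this shows that $\Psi=0$ has $p_{0}$ as a limit point and, locally, separates the untrapped sector --- necessarily attained at \emph{earlier} $t$, since $\Theta$ must become sufficiently negative before trapping sets in --- from the trapped sector at later $t$. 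The energy conditions enter only to control, through \reff{hatphinl}--\reff{Qhat}, the behaviour of $\mu,p,\Pi,{\cal E},Q$ along $\Psi=0$ near $p_{0}$, so that the horizon has a well-defined tangent there; by \reff{normal} that tangent is $\Psi^{a}=\alpha u^{a}+\beta e^{a}$ with slope $\alpha/\beta$ given by \reff{alph} evaluated in the limit $\chi\to0,\ t\to t_{s}(0)$.

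The core of the argument is the comparison. By \reff{Eprime2} the vector $k^{a}=E(u^{a}+e^{a})$ solves the null geodesic equation for every $E>0$; it is outgoing ($e^{a}k_{a}=E>0$), so $\chi$ increases along it in the sense of Section~4, and in the $[u,e]$ plane it has slope exactly $1$, whereas the horizon has slope $\alpha/\beta$ at $p_{0}$. Since both curves issue from $p_{0}$, the metric coefficients $A,B$ of \reff{LRSds} cancel when one compares their $t$-advance per unit $\chi$-advance, and to leading order the radial ray remains in $\{\Psi>0\}$ precisely when $\alpha/\beta\geq 1$. Thus, if $\alpha/\beta\geq1$, the outgoing radial geodesic through $(t_{0},\chi_{0})$ with $\chi_{0}$ small and $t_{0}<t_{s}(\chi_{0})$ sufficiently close to $t_{s}(\chi_{0})$ stays untrapped in a neighbourhood of $p_{0}$ and carries information outward from points arbitrarily near $p_{0}$; letting $\chi_{0}\to0$ and $t_{0}\to t_{s}(\chi_{0})$, these untrapped rays have past endpoints approaching $p_{0}$, which is exactly what it means for $p_{0}$ to be locally naked. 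If instead $\alpha/\beta<1$, every such radial ray --- hence, being the fastest outgoing causal curve, every future-directed causal curve --- from $p_{0}$ enters $\{\Psi<0\}$ immediately and is trapped, so $p_{0}$ is covered; evaluating \reff{alph} in the conformally flat perfect-fluid case (${\cal E}=\Pi=Q=0$, where $\alpha/\beta=(2\mu/3)/(-\mu/3-p)<1$ by the energy conditions) then recovers the black-hole conclusion anticipated in the Introduction.

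The step I expect to be the real obstacle is making this leading-order comparison rigorous. One must (i) prove that the limit of the horizon's tangent as $\chi\to0$ exists and equals \reff{alph} at $p_{0}$, which requires the energy conditions together with \reff{hatphinl}--\reff{Qhat} to bound the \emph{ratios} of the quantities $\mu,p,\Pi,{\cal E},Q$, each of which diverges individually at $p_{0}$; and (ii) handle the marginal value $\alpha/\beta=1$, where the first-order comparison is degenerate --- here one notes that $\alpha/\beta=1$ makes the horizon null and tangent to the escaping ray, which by itself already exhibits an outgoing radial null direction not swallowed by the trapped region, and this single escaping direction is enough for local visibility of $p_{0}$. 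The ancillary facts used --- that an untrapped outgoing radial null ray does reach non-singular points arbitrarily close to $p_{0}$, and that capture of the radial ray forces capture of the entire causal cone --- are standard consequences of spherically symmetric causal structure and will be invoked without re-derivation.
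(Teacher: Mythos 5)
Your proposal follows essentially the same route as the paper: both arguments identify local nakedness of the central singularity with the existence of outgoing radial null geodesics (slope $1$ in the local $[u,e]$ plane) escaping from points arbitrarily near it through the untrapped region, and reduce this to comparing that unit slope with the horizon slope $\alpha/\beta$ from \reff{alph}, concluding escape is possible precisely when $\alpha/\beta\ge 1$. Your version is somewhat more explicit about the role of the no-shell-crossing condition, the local structure of $\{\Psi>0\}$, and the technical caveats (existence of the tangent limit, the marginal case $\alpha/\beta=1$), but these are elaborations of the paper's argument rather than a different method.
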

\begin{proof} 
Let the central singularity be denoted by ($t=t_{s_0}$, $\chi=0$) in the $[u,e]$ plane. The key point here is that there should be available untrapped region in the local neighbourhood 
of the central singularity for a null geodesic with the past end point arbitrarily near the central singularity to escape. We have assumed here that the singularity curve is a non-decreasing  function of the affine parameter of the integral curve of the vector $e^a$ (see \cite{Goswami:2002he}), and hence no other collapsing shells becomes singular before the central shell. If the apparent horizon at the central singularity is ``{\em ingoing}", that is $\sfr{\alpha}{\beta}<0$, then the neighbourhood of the centre gets trapped before the central singularity and no null geodesic from a point arbitrarily close to the central singularity can escape. Also 
if the apparent horizon is ``{\em outgoing}" but spacelike, that is $0\le\sfr{\alpha}{\beta}<1$, then any outgoing null direction from the central singularity will be necessarily within the trapped region. Hence for these cases, any null geodesic from a point arbitrarily close to the central singularity will have $\tilde\Theta_{out}<0$ and hence they will fall to the 
singularity. Therefore the necessary condition for a singularity to be  locally naked is that the slope of the tangent to the apparent horizon at the central singularity is
positive and non-spacelike ($\sfr{\alpha}{\beta}\ge1$).
Conversely, suppose there exist a family of future directed null geodesics that has escaped from the points arbitrarily close to the central singularity in the $[u,e]$ plane . Then that would imply these points are non-trapped and the slope of the apparent horizon curve at the central singularity is greater than (or equal to) the slope of these outgoing null geodesic in order for them to escape. Hence $\sfr{\alpha}{\beta}\ge1$ is the necessary and sufficient condition for the singularity to be locally naked.
 \end{proof} 
This result is interesting as it transparently explains the role of the energy momentum tensor of the collapsing matter field as well as the Weyl curvature in making a spacetime singularity locally visible. Also, as shown in \cite{psj}, if a null geodesic emerge from the singularity, then there exist families of future- directed nonspacelike curves which also necessarily escape from the same. The existence of such families is crucial to the physical visibility of the singularity. In the next proposition we show 
the crucial importance of the Weyl curvature in deforming the trapped region in such a way that the singularity becomes locally visible.
\begin{prop}
Consider the gravitational collapse of spherically symmetric perfect fluid obeying strong energy condition $\mu\ge0$ and $\mu+3p\ge0$.
If the following conditions are satisfied :
\begin{enumerate}
\item The spacetime is free of shell crossing singularities,
\item Closed trapped surfaces exist,
\item The central singularity is marginally naked ($\sfr{\alpha}{\beta}=1$),
\end{enumerate}
then the limit of $\sfr{|{\cal E}|}{\mu+p}$ at the central singularity along the apparent horizon curve diverges.
\end{prop}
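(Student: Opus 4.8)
The plan is to specialise the horizon--tangent formula \reff{alph} to a perfect fluid and extract the content of marginal nakedness by a short algebraic manipulation. For a perfect fluid $\Pi=0=Q$, so \reff{alph} becomes $\sfr{\alpha}{\beta}=\bra{\sfr23\mu+{\cal E}}/\bra{{\cal E}-\sfr13\mu-p}$. The observation that drives everything is that the numerator of this fraction is the denominator plus $(\mu+p)$, so that along the apparent horizon curve $\Psi=0$,
\be
\frac{\alpha}{\beta}=1+\frac{\mu+p}{{\cal E}-\sfr13\bra{\mu+3p}}\;.
\ee
Hence condition (3), that $\sfr{\alpha}{\beta}\to1$ as one runs along the horizon towards the central singularity $(t_{s_0},0)$, is \emph{equivalent} to the statement that $(\mu+p)\big/\bra{{\cal E}-\sfr13(\mu+3p)}\to0$ there. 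Conditions (1) and (2) enter only to guarantee, through the results of Section~5, that the apparent horizon is a regular curve that actually reaches the central singularity, so that this limit is well posed; in particular $\hat K<0$, i.e.\ $\phi>0$, keeps the construction non-degenerate.

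Next I would bring in the strong energy condition. From $\mu\ge0$ and $\mu+3p\ge0$ one gets simultaneously $0\le\sfr13(\mu+3p)$ and $\sfr13(\mu+3p)\le\mu+p$ (the second inequality being just $\mu\ge0$ rearranged), and also $\mu+p\ge\sfr23\mu\ge0$. Writing ${\cal E}=\bras{{\cal E}-\sfr13(\mu+3p)}+\sfr13(\mu+3p)$ and using the reverse triangle inequality together with $\sfr13(\mu+3p)\le\mu+p$ gives
\be
|{\cal E}|\ \ge\ \big|\,{\cal E}-\sfr13(\mu+3p)\,\big|-\sfr13(\mu+3p)\ \ge\ \big|\,{\cal E}-\sfr13(\mu+3p)\,\big|-(\mu+p)\;.
\ee
Dividing by $\mu+p$ (positive on the approach), one finds $\sfr{|{\cal E}|}{\mu+p}\ge\big|{\cal E}-\sfr13(\mu+3p)\big|\big/(\mu+p)-1$. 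But the first term on the right is the reciprocal of $\big|\sfr{\alpha}{\beta}-1\big|$, which tends to $0$ by the previous paragraph, so that term diverges, and therefore so does $\sfr{|{\cal E}|}{\mu+p}$, which is the claim.

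The step that needs care --- and which is really the point of the proposition --- is the passage from ``$\mu+p$ negligible compared with ${\cal E}-\sfr13(\mu+3p)$'' to ``$\mu+p$ negligible compared with ${\cal E}$'': this is exactly where the strong energy condition is used, through the bound $\sfr13(\mu+3p)\le\mu+p$, which prevents the subtracted isotropic-pressure term from cancelling ${\cal E}$ down to the size of $\mu+p$. One should also dispose of the degenerate cases: that ${\cal E}-\sfr13(\mu+3p)$ cannot vanish along the horizon arbitrarily close to the central singularity (else $\sfr{\alpha}{\beta}$ would be infinite there, contradicting the assumed limiting value $1$), and that $\mu+p$ is not identically zero on the approach (if it were, the energy condition would force $\mu\equiv0$, incompatible with a curvature singularity where the density blows up, and in any event the divergence statement would then be trivially true). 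Finally, it is worth noting that the same computation shows ${\cal E}$ cannot vanish identically near the central singularity under these hypotheses, which is the quantitative counterpart of the Introduction's assertion that a conformally flat (${\cal E}=0$) collapsing perfect fluid must settle to a black hole rather than a naked singularity.
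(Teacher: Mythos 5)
Your proof is correct and takes essentially the same route as the paper's: specialise the horizon--slope formula to $Q=\Pi=0$, recast $\sfr{\alpha}{\beta}=1$ as the divergence of $\bras{{\cal E}-\sfr13(\mu+3p)}/(\mu+p)$, and invoke the strong energy condition to control the $\sfr13(\mu+3p)$ term. The only difference is presentational: where the paper simply notes that $\sfr{\mu+3p}{\mu+p}$ is finite, you make the bound $0\le\sfr13(\mu+3p)\le\mu+p$ explicit via the reverse triangle inequality.
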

\begin{proof}
We know that for a perfect fluid we have $Q=\Pi=0$, and at the central singularity $\sfr{\alpha}{\beta}=1$ implies
\be
\frac{\sfr23\mu+{\cal E}}{-\sfr13\mu-p+{\cal E}}=1\;,
\ee 
which can be simplified to
\be
\left[\frac{{\cal E}}{\mu+p}-\frac13\frac{\mu+3p}{\mu+p}\right]^{-1}=0.
\ee
For the perfect fluid satisfying the strong energy condition, $\sfr{\mu+3p}{\mu+p}$ is finite and hence $\sfr{|{\cal E}|}{\mu+p}$ at the central singularity 
along the apparent horizon tends to infinity.
\end{proof}
The above result clearly shows that the electric part of the Weyl scalar (which is responsible for the tidal forces) must diverge faster than the energy density along the 
apparent horizon curve, for a singularity to be locally naked. In fact, this results closely relates to the result obtained in \cite{Joshi:2001xi}. Equation (17) of that paper shows 
that the square of the shear scalar `$\sigma^2$' must diverge faster than the energy density at the central singularity of the collapsing dust.
\begin{cor}
Consider the continued gravitational collapse of a spherically symmetric perfect fluid obeying the strong energy condition $\mu\ge0$ and $\mu+3p\ge0$. If the 
spacetime is conformally flat then the end state of the collapse
is necessarily a black hole.
\end{cor}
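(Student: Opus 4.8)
The plan is to show that switching off the Weyl curvature forces the apparent-horizon slope $\alpha/\beta$ at the central singularity to violate the nakedness criterion of Proposition 3, so that the standard collapse dichotomy lands on a black hole. First I would record the two structural simplifications. Four-dimensional conformal flatness means $C_{abcd}\equiv0$, so in particular $E_{ab}=C_{abcd}u^cu^d=0$ and hence the LRS-II scalar $\E$ vanishes identically; for an LRS-II perfect fluid one already has $Q=\Pi=0$. Feeding $\E=Q=\Pi=0$ into the slope formula \reff{alph} collapses it to
\[
\frac{\alpha}{\beta}=\frac{\sfr23\mu}{-\sfr13\mu-p}=-\,\frac{2\mu}{\mu+3p}\,,
\]
so the tangent to the apparent horizon at the centre is now dictated purely by $\mu$ and $\mu+3p$.

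Then I would invoke the strong energy condition $\mu\ge0$, $\mu+3p\ge0$: the right-hand side above is $\le0$, so $\alpha/\beta\le0<1$, i.e.\ the apparent horizon at the central singularity is ``ingoing''. By Proposition 3 the necessary and sufficient condition for the central singularity to be locally naked is $\alpha/\beta\ge1$, which now fails, so the central singularity is covered. Since the non-central shell-focusing singularities are in any case covered once the matter obeys the energy conditions (as recalled just before Proposition 3), the entire singular boundary is hidden and the collapse terminates in a black hole. As an independent check one may run Proposition 4 instead: a marginally naked central singularity would require $|\E|/(\mu+p)\to\infty$ along the horizon, impossible when $\E\equiv0$ and $\mu+p$ is bounded away from zero (and $\mu+p=0$ is incompatible with the strong energy condition except at the trivial vacuum point).

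The step needing real care --- and the main obstacle --- is the degenerate locus $\mu+3p=0$, where the denominator above vanishes and $\alpha/\beta$ is not literally defined (the one-sided limit from the energy-condition-allowed side is $-\infty$, which must not be misread as ``$\ge1$''). Here I would argue directly from the un-normalised co-normal \reff{normal}, which for $\E=Q=\Pi=0$ reads $\nabla_a\Psi=(\sfr13\mu+p)u_a-\sfr23\mu\,e_a$: when $\mu+3p=0$ with $\mu>0$ this points purely along $e_a$, forcing the horizon tangent $\Psi^a=\alpha u^a+\beta e^a$ to have $\beta=0$ and hence $\Psi^a\Psi_a<0$, and one concludes --- exactly as in the $0\le\alpha/\beta<1$ sub-case of Proposition 3 --- that every outgoing null ray from the centre still lies in the trapped region. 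A companion point to make explicit is that the corollary as stated carries only ``continued gravitational collapse'' plus the strong energy condition, so the proof is run under the standing hypotheses of Proposition 3 --- absence of shell-crossing singularities (equivalently $\phi>0$, i.e.\ $\hat K<0$) and existence of closed trapped surfaces --- which for a genuine collapse scenario are the natural regularity requirements that make ``end state of the collapse'' well posed.
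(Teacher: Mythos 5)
Your proof is correct and follows essentially the same route as the paper: set ${\cal E}=Q=\Pi=0$ in the slope formula \reff{alph} and observe that the strong energy condition forces $\sfr{\alpha}{\beta}\le 0$, so the nakedness criterion $\sfr{\alpha}{\beta}\ge 1$ of Proposition 3 fails and the singularity is covered. Your explicit treatment of the degenerate locus $\mu+3p=0$ via the co-normal \reff{normal} is a welcome extra precaution that the paper handles implicitly through its case analysis of the sign of $\Psi^a\Psi_a\propto-\sfr13(\mu+p)(\mu-3p)$.
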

\begin{proof}
Conformally flat spacetime implies vanishing of the Weyl tensor. Hence we have ${\cal E}=0$. Also for a perfect fluid $Q=\Pi=0$. We therefore have 
\be 
\frac{\alpha}{\beta}=\frac{-\sfr23\mu}{\sfr13\mu+p}\;.
\ee
Now the condition $\sfr{\alpha}{\beta}\ge1$ implies $\mu+p\le0$ which violates the strong energy condition.
In fact one can explicitly calculate the norm of the tangent to show that
\be
\Psi^a\Psi_a \propto -\frac{1}{3}(\mu+p)(\mu-3p)\;.
\ee
If the strong energy condition is satisfied  we have $\mu+p>0$, then we have the following cases:
\begin{enumerate}
\item If $\mu>3p$ the $\Psi^a$ is ``ingoing'' timelike.
\item If $\mu=3p$ the $\Psi^a$ is ``ingoing'' null.
\item If $\mu<3p$ the $\Psi^a$ is ``ingoing'' spacelike.
\end{enumerate}
In all these cases the region around the centre gets trapped before the central singularity. Hence the singularity is always covered and the collapse end-state 
is always a black hole.
\end{proof}
The above proposition highlights the importance of tidal forces in delaying the trapping. Absence of the Weyl tensor necessarily implies the absence of any tidal stresses, 
and we can easily see that the trapping occurs before the singularity formation.
\section{Some specific examples}
In this section we briefly discuss some of the well known examples of gravitational collapse scenarios in the light of the discussion in previous sections. As we will see below, 
in all these cases we can transparently determine the end state of the continued gravitational collapse using the formalism developed in this paper.
\subsection{Oppenheimer-Snyder dust collapse}
This was the first theoretical model of continued gravitational collapse, where the collapsing matter was assumed to be dustlike and homogeneous. In this case the interior 
metric is the Friedmann-Lemaitre-Robertson-Walker (FLRW) spacetime and is given by 
\be
ds^2=-dt^2+\frac{a(t)^2}{1-kr^2}dr^2+r^2a(t)^2(d\Theta^2+\sin^2\Theta d\phi^2)\;.
\ee
The FLRW metric is conformally flat and hence ${\cal E}=0$. Moreover, since the matter is dustlike we have $p=0$. Hence the slope of the tangent to the central singularity, 
$\sfr{\alpha}{\beta}=-2$. Thus the apparent horizon is ingoing timelike and the end state of the collapse is a black hole.
\subsection{Lemaitre-Tolman-Bondi dust collapse}
This is a well known gravitational collapse model where the Cosmic Censorship Conjecture is violated. Ever though the collapsing matter is dustlike it may be inhomogeneous. The interior of the collapsing dust is described by the LTB metric
\be
ds^2=-dt^2+\frac{R'^2}{1-r^2b_0(r)}dr^2+R^2(d\Theta^2+\sin^2\Theta d\phi^2)\;.
\ee
Here $R(t,r)$ is the area radius of the collapsing dust shell and $b_0(r)$ denotes their energy profile. The system is specified by two free functions at the initial epoch, 
the energy profile $b_0(r)$ and the initial mass profile $F(r)\equiv r^3{\cal M}(r)$. From the Einstein field equations we have 
\be\label{fe}
F'=\mu R^2R'.
\ee
If we consider the marginally bound case where $b_0(r)=0$, then the equation of motion of the 
collapsing shells are given by \cite{Goswami:2004gy, Goswami:2002he}
\be\label{R}
\dot{R}^2=\sfr{F}{R},
\ee
and the electric part of the Weyl scalar is \cite{Zibin:2008vj}
\be\label{calE}
 {\cal E}= \frac{1}{3} \mu-\frac{r^3{\cal M}(r)}{R^3}\;.
\ee
Following  \cite{Goswami:2004gy, Goswami:2002he}, we can write $R=ra(r,t)$ where $a(r,t)$ is the `scale factor' for a shell labelled `$r$'. Also we consider 
a smooth density profile at the centre and hence write the function ${\cal M}(r)\equiv{\cal M}_0+{\cal M}_2r^2$. We know that for the singularity curve to 
be an increasing function of `$r$' to avoid shell crossings etc, we must have ${\cal M}_2<0$. Solving the equation 
of motion we get 
\be\label{a}
a(r,t)=\bra{1-\sqrt{{\cal M}(r)}t}^{2/3}.
\ee
We can easily check (from Proposition 2 and Einstein's equations) that the equation of the apparent horizon is given by  
$F=R$. Now the slope of the horizon is given by
 \be\label{alph}
\frac{\alpha}{\beta}=\frac{\sfr23\mu+{\cal E}}{-\sfr13\mu+{\cal E}}\;.
\ee
Calculating the slope at the central singularity (given by $t=t_{s_0}=1/\sqrt{{\cal M}_0}$ and $r=0$) and using (\ref{fe}, \ref{a}) we get 
\be
\frac{\alpha}{\beta}=\lim_{t\rightarrow t_{s_0}}\lim_{r\rightarrow 0} 1-\frac{F'}{R'}=1.
\ee
Hence we see that provided ${\cal M}_2<0$, the central singularity will be locally naked.

\section{Discussion}

In this paper, working in a covariant  and frame independent formalism, we successfully identified the physical and geometrical mechanisms responsible 
for delaying the trapped surface formation and making the central singularity locally naked during the continued gravitational collapse of a massive star. 
By working out the dynamics of the trapped region we transparently and quantitatively identified the role of Weyl curvature in deforming the trapped region in such a 
way that the singularity can be naked. As we know the Weyl curvature is responsible for the tidal force between nearby geodesics that generates the 
spacetime shear. In fact from the field equations (\ref{Sigthetadot} and \ref{Raychaudhuri}) for LRS-II spacetimes one can immediately see that the Weyl scalar is the source term for the 
shear evolution equation. Spacetime shear then deforms the apparent horizon and delays the trapping as shown in  \cite{Joshi:2001xi, Joshi:2004tb}.

These findings can have possible important observational signatures that can identify black holes from a naked singularity, and hence observationally test 
the weak censorship hypothesis \cite{Kong:2013daa}. As we have seen, the Weyl curvature is the key feature that can generate a locally visible 
singularity. Moreover Weyl curvature is also the generator of gravitational waves \cite{Clarkson:2007yp}. Hence one can expect signatures of 
locally naked singularities from the gravitational waves radiated from a collapsing star.

\begin{acknowledgments}
AH would like to thank Radouane Gannouji for the useful discussions. AH and RG are supported by National Research Foundation (NRF), South Africa. SDM 
acknowledges that this work is based on research supported by the South African Research Chair Initiative of the Department of
Science and Technology and the National Research Foundation.
\end{acknowledgments}


\begin{thebibliography}{99}

\bibitem{CCC}
R. Penrose, {\it Gravitational Collapse: The Role of General Relativity}, Riv. Nuovo Cimento, Num. Sp. I, 1969.

\bibitem{HE}
S. W. Hawking and G. F. R.
Ellis, \emph{\it The Large Scale Structure of Spacetime},
Cambridge University Press, 1973.



\bibitem{Goswami:2006ph} 
  R.~Goswami and P.~S. Joshi,
  `{\it Spherical gravitational collapse in N-dimensions},
  Phys.\ Rev.\ D {\bf 76}, 084026 (2007)
  [gr-qc/0608136].

\bibitem{Joshibook1} 
P.  S. Joshi, {\it Global Aspects in Gravitation and Cosmology}, Oxford University Press, 1993.

\bibitem{Joshibook2} 
 P. S. Joshi,{ \it Gravitational Collapse and Spacetime Singularities}, Cambridge University press, 2007.
 
\bibitem{Joshi:2001xi} 
  P.~S.~Joshi, N.~Dadhich and R.~Maartens,
  {\it Why do naked singularities form in gravitational collapse?},
  Phys.\ Rev.\ D {\bf 65}, 101501 (2002)
  [gr-qc/0109051].

\bibitem{Joshi:2004tb} 
  P.~S.~Joshi, R.~Goswami and N.~Dadhich,
  {\it Why do naked singularities form in gravitational collapse? 2}.,
  Phys.\ Rev.\ D {\bf 70}, 087502 (2004).

\bibitem{EllisLRS} G. F. R. Ellis: {\it The dynamics of pressure-free matter
in general relativity}. {\it Journ Math Phys} {\bf 8}, 1171 --
1194 (1967). H. van Elst  and G. F. R. Ellis, {\it Class. Quantum
Grav.} {\bf 13}, 1099 (1996), [gr-qc/9510044].

\bibitem{EllisCovariant} G.~F.~R.~Ellis \& H van Elst,
{\it Cosmological Models}, Carg\`{e}se Lectures 1998, in Theoretical
and Observational Cosmology, Ed. M Lachze-Rey, (Dordrecht: Kluwer
1999), 1. [arXiv:gr-qc/9812046].

\bibitem{Ellisbook} 
  George F. R. Ellis, Roy Maartens, Malcolm A. H. MacCallum 
 { \it Relativistic Cosmology} (Cambridge University press), 2007.

\bibitem{Clarkson:2002jz} 
  C.~A.~Clarkson and R.~K.~Barrett,
  {\it Covariant perturbations of Schwarzschild black holes},
  Class.\ Quant.\ Grav.\  {\bf 20}, 3855 (2003)
  [gr-qc/0209051].
  
\bibitem{Betschart:2004uu} 
  G.~Betschart and C.~A.~Clarkson,
  {\it Scalar and electromagnetic perturbations on LRS class II space-times},
  Class.\ Quant.\ Grav.\  {\bf 21}, 5587 (2004)
  [gr-qc/0404116].
  
\bibitem{Clarkson:2007yp} 
  C.~Clarkson,
  {\it A Covariant approach for perturbations of rotationally symmetric spacetimes},
  Phys.\ Rev.\ D {\bf 76}, 104034 (2007)
  [arXiv:0708.1398 [gr-qc]].

\bibitem{Stewart-Ellis}
    J. M. Stewart J M and G. F. R. Ellis, 
    {\it On solutions of einstein's equations for a fluid which exhibit local rotational symmetry }\textit{J. Math. Phys.}
    {\bf 9} 1072, (1968)



\bibitem{deSwardt:2010nf} 
  B.~de Swardt, P.~K.~S.~Dunsby and C.~Clarkson,
  {\it Gravitational Lensing in Spherically Symmetric Spacetimes},
  arXiv:1002.2041 [gr-qc].
  
 
 
  
  
\bibitem{Nzioki:2010nj} 
  A.~M.~Nzioki, P.~K.~S.~Dunsby, R.~Goswami and S.~Carloni,
  ``A Geometrical Approach to Strong Gravitational Lensing in f(R) Gravity,''
  Phys.\ Rev.\ D {\bf 83}, 024030 (2011)
  [arXiv:1002.2056 [gr-qc]].
  

\bibitem{GosEllis} 
  R.~Goswami and G. F. R. Ellis,
  ``{\it Almost Birkhoff Theorem in General Relativity}'',
  Gen.\ Rel.\ Grav.\  {\bf 43}, 2157 (2011)
  [arXiv:1101.4520 [gr-qc]].

\bibitem{senovilla} 
J. M. M. Senovilla, ``{\em Remarks on the Stability Operator for MOTS}", Progress in Mathematical Relativity, Gravitation and Cosmology,
Springer Proceedings in Mathematics and Statistics Volume 60, 403 (2014).


\bibitem{psj} 
  P.~S.~Joshi,
  ``{\em Visibility of a spacetime singularity}'',
  Phys.\ Rev.\ D {\bf 75}, 044005 (2007).
  
  
\bibitem{Goswami:2004gy} 
  R.~Goswami and P.~S.~Joshi,
  ``{\em Cosmic censorship in higher dimensions}'',
  Phys.\ Rev.\ D {\bf 69}, 104002 (2004)
  [gr-qc/0405049].

\bibitem{Goswami:2002he} 
  R.~Goswami and P.~S.~Joshi,
  ``{\em Spherical dust collapse in higher dimensions}'',
  Phys.\ Rev.\ D {\bf 69}, 044002 (2004)
  [gr-qc/0212097].



 
  
\bibitem{Zibin:2008vj} 
  J.~P.~Zibin,
  ``{\em Scalar Perturbations on Lemaitre-Tolman-Bondi Spacetimes}'',
  Phys.\ Rev.\ D {\bf 78}, 043504 (2008)
  [arXiv:0804.1787 [astro-ph]].

\bibitem{Kong:2013daa} 
  L.~Kong, D.~Malafarina and C.~Bambi,
  ``{\em Can we observationally test the weak cosmic censorship conjecture?}'',
  arXiv:1310.8376 [gr-qc].
    
  
   
\end{thebibliography}
\end{document}